\def\T{{ \mathrm{\scriptscriptstyle T} }}
  \newcommand{\bea}{\begin{eqnarray*}}
  	\newcommand{\eea}{\end{eqnarray*}}
  \newcommand{\ee}{\end{eqnarray}}
  \newcommand{\bay}{\begin{array}}
  	\newcommand{\eay}{\end{array}}
  \newcommand{\ben}{\begin{enumerate}}
  	\newcommand{\een}{\end{enumerate}}
  \newcommand{\bcen}{\begin{center}}
  	\newcommand{\ecen}{\end{center}}
\def\##1\#{\begin{align}#1\end{align}}
\def\$#1\${\begin{align*}#1\end{align*}}
\newcommand{\rF}{\textnormal{F}}
\renewcommand{\tr}{\textrm{trace}}
\global\long\def\expect{E}
\global\long\def\prob{\mathrm{Pr}}
\global\long\def\real{\mathbb{R}}
\global\long\def\manifold{\mathcal{M}}
\global\long\def\vec#1{\overrightarrow{#1}}
\newcommandx\pf[2][addprefix=\global]{{#1}^\prime_{#2}}
\newcommandx\estneighbor[2][usedefault, addprefix=\global, 1=\delta, 2=x]{\hat{\mathcal{N}}_{\mathcal{L}^{2}}(#1,#2)}
\newcommandx\estneighborM[2][usedefault, addprefix=\global, 1=\delta, 2=x]{\mathcal{N}_{\manifold}(#1,#2)}
\newcommandx\estneighborindex[2][usedefault, addprefix=\global, 1=\delta, 2=x]{\hat{\mathcal{I}}(#1,#2)}
\newcommandx\ambientball[1][usedefault, addprefix=\global, 1=\delta]{\mathbb{B}_{#1}^{\mathcal{L}^{2}}(x)}
\newcommandx\manball[1][usedefault, addprefix=\global, 1=\delta]{\mathbb{B}_{#1}^{\manifold}(x)}
\global\long\def\Exp{\mathrm{Exp}}
\global\long\def\spd{\mathrm{Sym}_{\star}^{+}(m)}
\newcommandx\tangentspace[2][usedefault, addprefix=\global, 1=\manifold]{T_{#2}#1}
\global\long\def\sym{\mathrm{Sym}(m)}
\newcommandx\fronorm[2][usedefault, addprefix=\global, 1=]{\|#2\|_{\rF}^{#1}}
\newcommandx\lnorm[2][usedefault, addprefix=\global, 1=]{\|#2\|_{2}^{#1}}
\global\long\def\innerprod#1#2{\langle#1,#2\rangle}
\newcommandx\froinnerprod[3][usedefault, addprefix=\global, 1=]{\innerprod{#2}{#3}_{\rF}^{#1}}
\def\T{{ \mathrm{\scriptscriptstyle T} }} 
\newcommand{\Rom}[1]{\text{\uppercase\expandafter{\romannumeral #1\relax}}}
\begin{document}

\title{ \LARGE Modeling Symmetric Positive Definite Matrices with An Application to Functional Brain  Connectivity}     


\author{
Zhenhua Lin\thanks{Department of Statistics, University of California, Davis.},~~Dehan Kong\thanks{Department of Statistical Sciences, University of Toronto.} ~and ~Qiang Sun\thanks{Address for correspondence: Department of Statistical Sciences, University of Toronto,
Toronto, Ontario M5S 3G3, Canada; {\textsc E-mail:} \texttt{qsun@utstat.toronto.edu}.}
}


\date{ }

\maketitle

\vspace{-0.25in}

\begin{abstract}
In neuroscience, functional brain connectivity describes the connectivity between brain regions that share functional properties. Neuroscientists often characterize it by a time series of covariance matrices between functional measurements of distributed neuron areas. An effective statistical model for functional connectivity and its changes over time is critical for better understanding the mechanisms of brain and various neurological diseases. To this end, we propose a  matrix-log mean model with an additive heterogeneous noise  for modeling random symmetric positive definite matrices that lie in a Riemannian manifold. The heterogeneity  of error terms is introduced specifically  to  capture the curved nature of the manifold. We then propose to use  the local scan  statistics to detect change patterns in the  functional connectivity.  Theoretically, we show that our procedure can recover all change points consistently. Simulation studies and an application to the Human Connectome Project lend further support to the proposed methodology. 

\end{abstract}
\noindent
{\bf Keywords}: change point, functional connectivity,  Riemannian manifold, sure coverage property.

\section{Introduction}\label{sec:1}

Understanding the functional brain connectivity  is critical for understanding the fundamental mechanisms of brain, how it works, and various neurological diseases. It  has attracted great interest recently.  For instance, the Human Connectome Project  investigates the structural and  functional connectivity in order to diagnose cognitive abilities of individual subjects. Functional connectivity can be defined as 	temporary statistical dependence between spatially remote neurophysiological events \citep{friston2011functional}, 
and has been observed to be dynamic in nature, 
even in the resting state \citep{hutchison2013dynamic}. In practice, neuroscientist often characterize the dynamic functional connectivity by a series of symmetric positive definite (SPD) covariance matrices between functional measurements of neuronal activities across different regions in human brain. Establishing appropriate dynamic  models is critical for understanding fundamental mechanisms of brain networks  and has attracted much  attention  in  neuroscience recently \citep{xu2015dynamic,hutchison2013dynamic}. 


However, little has been done in the statistics community for investigating dynamic changes of functional connectivity over time. 
The non-Euclidean structure of covariances has  introduced significant challenges to the development of proper statistical models and their analysis. Indeed, all SPD matrices form a nonlinear Riemannian manifold, which is referred to as the SPD manifold.  Motivated by the SPD manifold structure under the Log-Euclidean metric  \citep{Arsigny2007}, we use the matrix logarithm to embed the SPD matrices into a Hilbert space -- an Euclidean space up to a symmetric structure, to be concrete. We then model the transformed random SPD matrix using a mean model with an additive heterogeneous noise.  The heterogeneous error depends on the tangent space of the mean SPD matrix and thus takes  the curved structure of the SPD manifold into account.  Our work refines the previous work by   \citet{chiu1996matrix}, whose model does not respect the original manifold structure.   

Built on this statistical model,  we then propose to use a form of local  scan statistics to detect multiple change patterns that are present in the functional brain connectivity over time.  To the best of our knowledge, ours is the first work  on the study of change point detection for SPD manifold-valued data. 
Although the proposed method is primarily motivated by discovering change patterns in fMRI, 
it has the potential to be applied to many other applications, such as diffusion tensor imaging \citep{dryden2009non} and longitudinal data analysis \citep{daniels2002bayesian}.




\subsection{Related Literature}
Change point detection  with at most a single change point has been widely studied in the literature. When the distributions of the data are assumed to be known, score- or likelihood-based procedures can be applied \citep{james1987tests}. Bayesian and nonparametric approaches have also been proposed, see \cite{carlstein1994change} for a review. More recently,  \cite{chen2015graph} proposed a graph based approach for  nonparametric change point detection.    When there are multiple change points, the problem  becomes much more complicated. Some popular approaches include the exhaustive search with Schwarz criterion \citep{yao1988estimating}, the circularly binary segmentation \citep{olshen2004circular} and the fused lasso method \citep{tibshirani2007spatial}. 
In genomics, these techniques have been exploited to study DNA copy number variations, 
see \cite{olshen2004circular, zhang2007modified, tibshirani2007spatial} among others. 
However, 
none of the above methods  deals with Riemannian data. 

 There have  been a few works on Riemannian data analysis in the statistics literature. 
For example, \citet*{Schwartzman2006} proposed 
several test statistics for comparing the means of two populations of symmetric positive definite matrices.  \citet{Zhu2009}  developed a semiparametric regression model for symmetric definite positive matrices with Euclidean covariates. Later,   \citet{YuanYing2012} studied the local polynomial regression in the same setting. 
\citet{steinke2009non} consider nonparametric regression between general Riemannian manifolds. 
\citet*{Petersen2016} developed a novel Fr\'echet regression approach for complex random objects with euclidean covariates. We believe our work will be a valuable addition to the literature.  

\section{Geometric Interpretation}\label{sec:2}


We briefly introduce $\spd$, the Riemannian manifold consisting of all $m\times m$ symmetric positive definite matrices, while we refer readers to the appendix for more details. A Riemannian manifold is a smooth manifold endowed with an inner product
$\innerprod{\cdot}{\cdot}_{x}$ on the
tangent space at each point $x$, such that $\innerprod{\cdot}{\cdot}_{x}$
varies with $x$ smoothly.   
We consider the Log-Euclidean metric  for the symmetric positive definite matrix manifold due to  its computational tractability \citep{Arsigny2007}. Other metrics include the naive Frobenius metric which does not account for the curved nature of symmetric positive definite matrices and the affine invariant metric  which is more difficult to compute \citep{Terras2012}. 


For a manifold $\manifold$, we use  $\tangentspace[\mathcal{M}]x$ to denote the tangent space at the base point $x$. 
{It can be shown that the tangent space $\tangentspace[\spd]I$ at the identify matrix $I$ is the space of $m\times m$
symmetric matrices, denoted by $\sym$.}   
For the Log-Euclidean metric, the inner product between $U,V\in\sym$ on $\tangentspace[\spd]I$ at the identity matrix $I$ is defined as  $\innerprod UV_I=\tr(U V)$. To define the inner product at a general point, we utilize the concept of differential maps.  For a smooth transformation $\varphi:\mathcal{N}\rightarrow\manifold$ between two manifolds, 
its {differential}
at $x$, denoted by $\pf{\varphi}{x}$, is a linear map sending a tangent vector  $v\in\tangentspace[\mathcal{N}]x$
to a tangent vector $\pf{\varphi}{x}(v)\in\tangentspace[\manifold]{\varphi(x)}$.
{See Figure \ref{fig:differential} for a graphical illustration.}
 When both $\mathcal{N}$ and $\manifold$ are Euclidean submanifolds, the differential $\pf{\varphi}{x}$ is the usual notion of differential of the function $\varphi$ at $x$, given by a Jacobian matrix. With this formalism, we consider the smooth map $\log: \spd\rightarrow \sym$, where $\log$ is the matrix logarithm, the inverse map of  the matrix exponential. The matrix exponential of a matrix $U\in \sym$ is defined as $\exp(U)=\sum_{k=0}^\infty U^k/k!$.  
 The Riemannian metric at a general point $S$ is then defined as  $\innerprod UV_S=\langle \log'_S U, \log'_S V\rangle_I$, where $\log'_S: \tangentspace[\spd]S\rightarrow \tangentspace[\sym]S$ is a linear  operator \citep{Arsigny2007}. The Riemannian exponential map under this metric
is given by $\Exp_{S}U=\exp(\log S+\pf{\log}{S}U)$, where $\exp$ is the matrix exponential.  Riemannian exponential maps are closely related to the intrinsic properties of a manifold, such as the geodesics and the Gauss curvature \citep{Lee1997}. 

\begin{figure}[t]
\center{
\includegraphics[scale=.78]{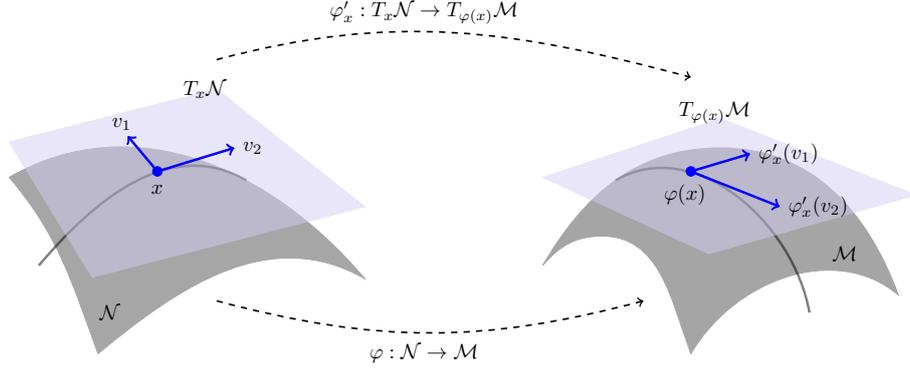}
}
\caption{Illustration of smooth map between manifolds and its differential.}\label{fig:differential}
\end{figure}

\section{Methodology}\label{sec:3}

\subsection{A Heterogeneous Matrix-log Mean Model  \label{subsec:Model}}

Suppose   
we have collected a sequence of  matrix-valued observations $Y_{1},\ldots,Y_{n}\in \spd$. 
 We propose the following matrix-log mean model for investigating the mean changes of the data sequence
\#
\log Y_{i}=\log\mu_{i}+\pf{\log}{\mu_i}\varepsilon_{i}, \label{eq:model-sym}
\#
where  $\mu_{i}\in\spd$ is the mean matrix, and $\varepsilon_{i}\in\tangentspace[\spd]{\mu_{i}}$
is a {mean-zero  error term in $\tangentspace[\spd]{\mu_{i}}$}. 
Here  $\pf{\log}{\mu_i}: \tangentspace[\spd]{\mu_{i}} \rightarrow \tangentspace[\sym]{\log \mu_i}$ is a linear operator
acting on $\varepsilon_{i}$. 
The noise term $\pf{\log}{\mu_i}\varepsilon_{i}$ has mean zero, but the corresponding covariance   
depends on $\mu_{i}$.  Hence model \eqref{eq:model-sym} has a heterogeneous noise
component. 

Interestingly, the heterogeneity of the noise terms makes use of the Riemannian manifold structure introduced in Section \ref{sec:2}. Without using the geometric structure, one could simply apply the matrix logarithm first and then model the random SPD matrices $Y_i$'s as 
\#
\log Y_{i}=\log\mu_{i}+\xi_i, \notag 
\#
where $\xi_i$'s are identically distributed random elements. This naive model, first introduced by \cite{chiu1996matrix}, misses the curved structure in the SPD manifold, and thus is less efficient for estimation and inference. 
{Different from theirs, we introduce  the {location-dependent transformations $\log_{\mu_i}^\prime$}  in model \eqref{eq:model-sym}   
to  respect the original manifold structure, because it turns this model  into a geodesic/intrinsic mean model}. To appreciate this, we take matrix exponential on both sides and find that 
\#
Y_i=\exp(\log\mu_i+\log'_{\mu_i}\varepsilon_i)=\Exp_{\mu_i}\varepsilon_i.\notag
\#
It can shown that $\mu_i$ is the minimizer to the following optimization program
\$
\mu_i=\argmin_{S\in\spd} \EE g^2(S, Y_i),
\$
where $g(S, Y_i)$ is the geodesic distance between $A$ and $Y_i$ in $\spd.$ Therefore, model \eqref{eq:model-sym} serves as an exact counterpart of the Euclidean  mean model $Y=\mu+\varepsilon\in \RR^d$, where $\mu$ minimizes $\EE \|Y-a\|_2^2$ over $a\in \RR^d$.

\begin{remark}
We emphasize  that the idea of using a matrix logarithm  to  model SPD matrices  was first explored by \citet{leonard1992bayesian} and \citet{chiu1996matrix}. 
However, their approach does not take  the manifold structure into account. From the modeling perspective, our key contribution is that we establish a parametric model for SPD matrices that respects the original manifold structure. 
\end{remark}

Model \eqref{eq:model-sym} provides a natural way to investigate change-point detection problems for SPD manifold-valued data. For this purpose, we further 
 assume that there exist $\mathcal{J}=\{\tau_{j}:j=1,\ldots,J\}$ and $1\leq\tau_{1}<\cdots<\tau_{J}\leq n-1$
such that $\mu_{\tau}\neq\mu_{\tau+1}$ if $\tau\in\mathcal{J}$ and
$\mu_{\tau}=\mu_{\tau+1}$ otherwise. Elements in $\mathcal{J}$ are
called change points. Our goal  is to detect $\mathcal{J}$ based on the data sequence $Y_{1},\ldots,Y_{n}$.


\subsection{Computational Details}
Computationally, it is more convenient to work with a basis of the space $\sym$ which is a $d=m(m+1)/2$ dimensional Hilbert space under
the Frobenius inner product. The Frobenius inner product between $A=(a_{ij})$ and $B=(b_{ij})$  is defined as $\froinnerprod AB=\sum_{i,j=1}^{m}a_{ij}b_{ij}$.  Let $\phi=\{\phi_k:\,1\leq k\leq d \}$ be an orthonormal basis of $\sym$ under this inner product. 
Then, for any $A\in\sym$,
we can write $A=\sum_{k=1}^{d}c_{k}\phi_{k}$ with $c_k=\froinnerprod{A}{\phi_k}$, 
and identify it with its coefficient vector $(c_1,\ldots,c_d)^\T$, denoted by $\vec{A}$. 

In this paper, we adopt the basis constructed in the following.  
Let $B_{ij}$ be the matrix
of zeros except the $(i,j)$ and $(j,i)$ entries, which are set to $1$ if $i=j$, and   $1/\sqrt{2}$ otherwise. Since $B_{ij}=B_{ji}$,
we consider  basis matrices $B_{ij}$'s with $i\geq j$. It can be
checked that $\fronorm{B_{ij}}=1$ and $\froinnerprod{B_{ij}}{B_{k\ell}}=0$
if $i\neq k$ or $j\neq\ell$, where $\fronorm{\cdot}$ denotes the
Frobenius norm. Let  $\phi_{i(i-1)/2+j}=B_{ij}$, and then $\{B_{ij}:1\leq j\leq i\leq m\}$
form an orthonormal basis for $\sym$. We use this basis in our computation. Note that the results presented in the paper are identical for all bases. 

To compute the matrix logarithm $\log Y$ for $Y\in\spd$, we first
find a unitary matrix $P$ such that $Y=P\Lambda P^{-1}$ for a diagonal
matrix $\Lambda=\mathrm{diag}(\lambda_{1},\ldots,\lambda_{m})$. The
matrices $P$ and $\Lambda$ can be computed by eigendecomposition
or singular value decomposition (SVD). Then $\log Y=P\log(\Lambda)P^{-1}$
with $\log\Lambda=\mathrm{diag}(\log\lambda_{1},\ldots,\log\lambda_{m})$.

To compute the matrix representation of the linear differential operator $\pf{\log}{\mu}$
for a given symmetric positive definite matrix $\mu$ with respect to the basis $\boldsymbol{\phi}$,
we first note that $\pf{\log}{\mu}=(\pf{\exp}{\log\mu})^{-1}$ \citep{Arsigny2007}. Therefore,
once we have the matrix representation $Q$ of the linear operator
$\pf{\exp}{\log\mu}$ with respect to the basis $\phi$, then $Q^{-1}$ will be the matrix representation
of $\pf{\log}{\mu}$, noting that the non-singularity of $\pf{\exp}{\log\mu}$
everywhere implies the invertibility of $Q$. If $\zeta_{j}\in\real^{d}$
is the coefficient vector (viewed as a column vector) of $\pf{\exp}{\log\mu}\phi_{j}$ with respect to a chosen basis, then it
is seen that the $Q$ is given by the matrix $Q=[\zeta_{1}\:\ldots\;\zeta_{d}]$ concatenated by column vectors $\zeta_j$. Therefore, the problem
boils down to the computation of 
\[
\pf{\exp}{\log\mu}\phi_{j}=\sum_{k=1}^{\infty}\frac{1}{k!}\sum_{\ell=0}^{k-1}(\log\mu)^{k-\ell-1}\cdot\phi_{j}\cdot(\log\mu)^{\ell}. 
\]
 Numerically, the above series is truncated
at a sufficiently large $K$. Note that when $\mu=I_{m}$, we have
specially $\pf{\exp}{\log\mu}=I_{m}$.


\subsection{A  Local Scan Procedure}

Roughly speaking, an ideal  statistic for detecting change patterns, or change points,   at a position $x$ should directly relate to the possibility that $x$ is a change point. 
The statistic at the position $x$  we proposed is a locally weighted average of the transformed $Y_i$'s near  $x$:
\[
G(x,h)=\sum_{i=1}^{n}w_{i}(h)\vec{\log Y_{i}}, 
\]
where {$w_{i}(h)=1/h$ if $1-h\leq i-x\leq0$, $w_{i}(h)=-1/h$ if $1\leq i-x\leq h$, and $w_i(h)=0$ otherwise.}
We remind the readers that $\vec{A}$ denotes the coefficient vector of the matrix $A$ with respect to a basis $\phi$.
The $G(x,h)$ defined above is constructed based on data points within a local window of size $2h$ around the point $ x $. The intuition is that, {if there is no change point within the window $(x-h,x+h)$, then $G(x,h)$ has mean zero and $\|G(x,h)\|_2$ is close to zero. Otherwise, if $\|G(x,h)\|_2$ is large, then $x$ is likely to be a change point. In particular, points that locally maximize $\|G(\cdot,h)\|_2$ have a high chance of being a change point.}   We say that $x$ is a $h$ local maximizer if $\lnorm{G(x,h)}\geq\lnorm{G(j,h)}$
for all $j\in\{x-h,\ldots,x+h\}$. The set of $h$ local maximizers
is denoted by $\mathcal{L}(h)$. Suppose $\lnorm{G(j_{1},h)}\geq\lnorm{G(j_{2},h)}\geq\cdots\geq\lnorm{G(j_{\ell},h)}$,
where $\ell$ is the number of elements in $\mathcal{L}(h)$. For
a given threshold $\rho>0$, we then estimate $\mathcal{J}$ by $\widehat{\mathcal{J}}=\{\tau\in\mathcal{L}(h):\lnorm[2]{G(\tau,h)}\geq\rho\}$,
and $J$ is estimated by the cardinality of $\widehat{\mathcal{J}}$. 


However, the above procedure for estimating $J$ depends on the unknown parameter $\rho$. 
In practice, we propose a data-driven alternative based on the 
$K$-fold cross validation to select  the number of change points. 
Suppose that  $j_{(1)},\ldots,j_{(k)}$ are change
points, which  divided all time points into $k+1$ segments. Within each segment, time points are randomly split  into $K$ partitions. The sample mean of a segment
is estimated by using data from any $K-1$ partitions within that segment, and the  validation
error is  evaluated on the rest one partition. 
The cross validation error of the segment is defined to be the sum of validation errors from the $K$ partitions, while
the total cross-validation error  is the sum of cross-validation errors across all $k+1$ segments. 
Formally, the total cross-validation error  is defined as 
\[
CV(k)=\sum_{q=1}^{k+1}\sum_{p=1}^{K}\sum_{i\in\mathscr{P}_{q,p}}(\widehat{\log\mu}_{q,p}-\vec{\log Y_{i}})^{\T}(\widehat{\log\mu}_{q,p}-\vec{\log Y_{i}}),
\]
where $\mathscr{P}_{q,p}$ is the $p$th partition of the $q$th segment, and $\widehat{\log\mu}_{q,p}=|\mathscr{P}_{q,-p}|^{-1}\sum_{i\in\mathscr{P}_{q,-p}}\vec{\log Y_{i}}$. Here $\mathscr{P}_{q,-p}$ denotes the time points in the $q$th
segment but not in the partition $\mathscr{P}_{q,p}$ and $|\mathscr{P}_{q,-p}|$
the cardinality of the set $\mathscr{P}_{q,-p}$. {The integer that minimizes $CV(\cdot)$ is chosen as an estimate of $J$.} We then estimate the locations of change points using the proposed  scan statistics. 

\section{Asymptotic Theory}\label{sec:3}

A random vector $\xi\in\real^{d}$ is called a subgaussian vector
with parameter $(\nu,\sigma)$ if $\eta\geq0$, $\nu\in\real^{d}$ and
for all $a\in\real^{d}$,
\[
\expect\left[\exp\left(a^{\T}(\xi-\nu)\right)\right]\leq\exp(\|a\|^{2}\sigma^{2}/2).
\]
We say that a random element in $\sym$ is subgaussian if its coefficient vector with respect to the orthonormal basis $\phi$ is subgaussian. One can easily check that this definition is independent of the choice of the  orthonormal basis of $\sym$. Below we shall assume $\log^\prime_{\mu_i}\varepsilon_i$ is subgaussian with parameter $(0,\sigma_i)$. This $\sigma_i$ might depend on $\mu_i$ and thus the linear operator $\log^\prime_{\mu_i}$ and its matrix representation $\Sigma_i$. For example, one might conceive of i.i.d. subgaussian random elements $\varepsilon_1,\ldots,\varepsilon_n$ and applying $\Sigma_i$ to $\varepsilon_i$, where different transformations result in different distributions of $\log^\prime_{\mu_i}\varepsilon_i$. Although subgaussianity is well preserved by linear transformations, the subgaussian parameter might differ after transformation. For example, one can show that, if $\varepsilon_i$ is subgaussian with a parameter $\theta$, then $\log^\prime_{\mu_i}\varepsilon_i$ is subgaussian, but with a parameter $\theta\sqrt{\|\Sigma_i\Sigma_i^{\T}\|}$. In this case, $\sigma_i$ might quantify the magnitude, measured by $\sqrt{\|\Sigma_i\Sigma_i^{\T}\|}$, of the transformation $\log^\prime_{\mu_i}$.

To derive the sure coverage property of the proposed procedure, we define $\delta=\inf\{\|\delta_{\tau}\|_{2}:\tau\in\mathcal{J}\}$, $\sigma=\max\{\sigma_1,\ldots,\sigma_n\}$, and $L=\inf_{1\leq j\leq J}(\tau_{j}-\tau_{j-1})$, where we conventionally
denote $\tau_{0}=0$ and $\tau_{J+1}=n$. 
We need the following assumption. 
{
\begin{assumption}\label{ass:1}
The quantities $\delta$, $L$, and  $\sigma$ satisfy that  $\delta^{2}L\geq 16\sigma^2(d+2\sqrt d+2\log n+2\log\log n)$.
\end{assumption}}
Here, $\sigma$  characterizes the variability of $\pf{\log}{\mu_i}\varepsilon_{i}$'s over all time
points. The quantity $\delta$
characterizes  the strength of the weakest signal of change points, while $L$ indicates the separability of change points. Intuitively, when $\delta$ and $L$ are small, no method would succeed in  recovering all change points. Recall that $d=m(m+1)/2$ denotes the dimension of the space $\spd$. It is seen that detection of change points becomes harder for higher dimensional matrices, i.e., a larger $m$, since  stronger signal (larger $\delta$) or  better separation of change points (larger $L$) is required to make the inequality in the above assumption hold.

Now we establish  the sure coverage property of the proposed procedure, that is, the union of the intervals selected by our procedure recovers all change points with probability going to $1$.   An nonasymptotic probability bound is also derived, with explicit dependence on the sample size $n$. 
We use  $\mathcal{J}\subset\widehat{\mathcal{J}}\pm h$
to denote that  $\tau_{j}\in(\hat{\tau}_{j}-h,\hat{\tau}_{j}+h)$ for all
$j=1,\ldots,J$. 
We are ready to state the main theorem of this paper, whose  proof
is deferred to the appendix. 
\begin{theorem}\label{thm:1}\label{thm:sure-coverage}
Suppose that Assumption \ref{ass:1} holds. If $\rho=\delta^{2}/4$ and $h=L/2$, then 
\$
\prob\big(\hat{J}&=J,\mathcal{J}\subset \widehat{\mathcal{J}}\pm h\big)\rightarrow 1, ~\textnormal{as}~  n\rightarrow \infty. 
\$
\end{theorem}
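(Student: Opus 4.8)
The plan is to write $G(x,h)=\bar G(x,h)+Z(x,h)$, where $\bar G(x,h)=\expect G(x,h)=\sum_i w_i(h)\,\vec{\log\mu_i}$ is the signal and $Z(x,h)=\sum_i w_i(h)\,\vec{\pf{\log}{\mu_i}\varepsilon_i}$ is the centred noise, and to analyse the two pieces separately. First I would compute $\bar G$ exactly. Because $h=L/2$, every length-$2h$ window $[x-h+1,\,x+h]$ meets at most one change point, so $\bar G(x,h)=0$ whenever the window contains no change point, while for $x=\tau_j+s$ with $|s|\le h$ one gets $\bar G(\tau_j+s,h)=\tfrac{h-|s|}{h}\,\delta_{\tau_j}$, where $\delta_{\tau_j}=\vec{\log\mu_{\tau_j+1}}-\vec{\log\mu_{\tau_j}}$. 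Hence $\lnorm{\bar G(\tau_j+s,h)}=\tfrac{h-|s|}{h}\lnorm{\delta_{\tau_j}}$ is a scalar, unimodal ``tent'': it peaks at $s=0$ with height $\lnorm{\delta_{\tau_j}}\ge\delta$ and decreases linearly to $0$ at $|s|=h$. This explicit shape drives everything that follows.

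Next I would control the noise uniformly. Each $Z(x,h)$ is a weighted sum of independent mean-zero subgaussian vectors, and since $\sum_i w_i(h)^2=2/h$ it is itself subgaussian with parameter $s=\sigma\sqrt{2/h}$. A standard tail bound for the Euclidean norm of a subgaussian vector then controls $\lnorm{Z(x,h)}$ in terms of $s$, the dimension $d$, and a deviation level $t$; taking $t\asymp\log n+\log\log n$ and a union bound over the $O(n)$ evaluation points makes the failure probability $O(1/\log n)\to0$. Substituting $h=L/2$ (so $s^2=4\sigma^2/L$), Assumption~\ref{ass:1} --- whose right-hand side is exactly calibrated, through the constant $16$ and the terms $d+2\sqrt d+2\log n+2\log\log n$, to this norm-plus-union-bound cost --- yields an event $\Omega$ with $\prob(\Omega)\to1$ on which $\max_x\lnorm{Z(x,h)}\le\delta/2$.

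All remaining work takes place on $\Omega$, with threshold $\sqrt\rho=\delta/2$. For the absence of spurious detections: if $x\in\widehat{\mathcal J}$ then $\lnorm{G(x,h)}\ge\delta/2\ge\lnorm{Z(x,h)}$, which forces $\lnorm{\bar G(x,h)}>0$ and hence a change point inside the window, so $|x-\tau_j|<h$ for some $j$; since $L=2h$ the intervals $(\tau_j-h,\tau_j+h)$ are pairwise disjoint, giving $\widehat{\mathcal J}\subset\bigcup_j(\tau_j-h,\tau_j+h)$ with at most one basin per change point. For coverage: at $x=\tau_j$ the tent gives $\lnorm{G}\ge\delta-\delta/2=\delta/2$, while at the endpoints $\tau_j\pm h$ one has $\lnorm{G}\le\delta/2$, so the maximizer of $\lnorm{G(\cdot,h)}$ over $[\tau_j-h,\tau_j+h]$ is interior, clears the threshold, and --- I claim --- is an $h$-local maximizer, producing a detection within $h$ of $\tau_j$. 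Combining the two inclusions yields $\hat J=J$ and $\mathcal J\subset\widehat{\mathcal J}\pm h$.

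The hard part is the claim just flagged: upgrading the basin-maximizer to a bona fide $h$-local maximizer and showing each basin contributes exactly one. The difficulty is that the comparison set $\{x-h,\dots,x+h\}$ of a candidate near $\tau_j$ can, precisely when $\tau_{j+1}-\tau_j=2h$, reach onto the ascending flank of the neighbouring tent, so one must rule out a local maximizer being ``poached'' by an adjacent change point and rule out two local maximizers inside one basin. I would handle this by first pinning the basin-maximizer via $|x^\ast-\tau_j|\le h\delta/\lnorm{\delta_{\tau_j}}\le h$ (a tall jump forces $x^\ast$ near $\tau_j$, away from the shared boundary), then using the scalar unimodal form of $\lnorm{\bar G}$ together with the uniform bound $\lnorm{Z}\le\delta/2$ to bound the separation of any two local maxima inside a basin; equalities in the local-maximizer inequalities occur with probability zero under a continuous noise law and are discarded. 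This bookkeeping, though elementary, is where the separation $L=2h$ and the precise constant $\rho=\delta^2/4$ are fully exploited.
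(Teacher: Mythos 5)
Your probabilistic core coincides with the paper's: the decomposition of $G(x,h)$ into a mean (tent-shaped, since $h=L/2$ implies each window meets at most one change point) plus centred subgaussian noise with parameter $\sigma\sqrt{2/h}$, the Hsu--Kakade--Zhang-type norm tail bound at level $t_n=\log n+\log\log n$, and the union bound over $O(n)$ points giving failure probability $O(1/\log n)$ are exactly the paper's computations, and your calibration of Assumption~\ref{ass:1} (so that $\rho=\delta^2/4\ge a_n$ and $\delta\ge 2\sqrt{a_n}$) matches the displayed inequalities in the appendix. The divergence is in the deterministic step. The paper does not prove it: it invokes Lemma 3 of \citet{Niu2012} (restated as Lemma~\ref{lem:2}) to pass from the event ``all flat points below the threshold, all change points above it'' to the conclusion. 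You attempt this step from scratch, and that is precisely where your argument has a genuine gap --- at the spot you yourself flag.

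Concretely, two of your repair ingredients fail. First, the pinning bound $|x^{*}-\tau_j|\le h\delta/\|\delta_{\tau_j}\|_2$ is vacuous in exactly the regime Assumption~\ref{ass:1} is designed for: when the weakest jump satisfies $\|\delta_{\tau_j}\|_2=\delta$, it reads $|x^{*}-\tau_j|\le h$, which you already knew, so it gives no protection against the boundary poaching you are worried about. Second, ``unimodal tent plus $\max_x\|Z(x,h)\|_2\le\delta/2$'' does not logically rule out two selected $h$-local maximizers inside one basin. With $\|\delta_\tau\|_2=\delta$ and $L=2h$, nothing in these constraints forbids a noise configuration with $\|G(\tau,h)\|_2=\delta/2$ while $\|G(\tau-h+1,h)\|_2$ and $\|G(\tau+h-1,h)\|_2$ are near $\delta/h+\delta/2\ge\delta/2=\sqrt{\rho}$: the two candidates are $2h-2>h$ apart, so neither lies in the other's comparison window; each can dominate its own window, since flat neighbours sit below $\delta/2$ and even an adjacent tent's flank is compatible because $\delta/h+\delta/2\ge(h-1)\delta/h-\delta/2$; and both clear the threshold, yielding $\hat J=2$ for a single true change point. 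Excluding this requires the finer combinatorial bookkeeping of Niu and Zhang's lemma (or an honest reproduction of it), not the norm bounds alone. Finally, discarding ties ``with probability zero under a continuous noise law'' imports an assumption the theorem does not make --- subgaussian errors may be discrete --- so it cannot be used to break equalities in the $\ge$-based definition of local maximizer. Until the one-detection-per-basin claim is actually established, your proof shows coverage of every $\tau_j$ by some near-maximizer but not $\hat J=J$.
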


\begin{remark}
We emphasize here that the dimension $d$ does not need to be  assumed to be fixed
and could potentially diverge to infinity as long as Assumption \ref{ass:1} holds.
\end{remark}

\section{Simulation Studies}\label{sec:4}

In this section, we examine the empirical performance of our method. We generate data according to model \eqref{eq:model-sym}. In the first example, we consider different combinations of $n, m, J$ such that $ (n, m, J)=(100, 6, 2)$, $(200, 6, 2)$, $(200, 6, 4)$, $(400, 6, 4)$ respectively. When $ J=2 $, we set $ \mu_{1}=\ldots=\mu_{n/4}=I_m $, $ \mu_{n/4+1}=\ldots=\mu_{3n/4}=2I_m $ and $\mu_{3n/4+1}=\ldots=\mu_{n}=5I_m $. When $J=4$, we set $ \mu_{kn/5+1}=\ldots=\mu_{(k+1)n/5}=A_k $ for $ k=0, \ldots, 4$, where $ A_1=I_m $, $A_2={\rm diag}(I_{m/2}$, $3I_{m/2}) $, $A_3=3I_m $, $A_4={\rm diag}(3I_{m/2}, 10I_{m/2}) $, and $A_5=10I_m $. For the symmetric random noise, we first sample the coefficient vector from distribution $N(0,I_d)$, then combine it with the basis $\phi$ to generate the noise $\varepsilon_i$. Our second example is concerned with  a larger $m$ by setting that $ (n, m, J)=(100, 10, 2)$, $(200, 10, 2)$, $(200, 10, 4)$, $(400, 10, 4)$. 

Choosing the optimal bandwidth is  usually a difficult task for change point problems, see, for example, \citet{Niu2012} for a detailed discussion.  Intuitively, when there is only one change point in the interval $(x-h, x+h)$, the larger $h$ is, the more powerful the scan statistic is. But when the bandwidth gets too large, the interval might contain multiple change points. Therefore we need to choose bandwidth carefully. In our simulations, we found that
 the performance of the procedure is relatively robust to the choice of the bandwidth as long as the bandwidth is not too large, and $h=20$ works relatively well in our case.  We  use the proposed cross validation technique to select the number of change points. We run 100 repetitions of Monte Carlo studies. For each run, we  calculate the estimated number of change points and the locations of the change points. We report the frequencies of the three cases: $\widehat J\!<\!J $, $\widehat J\!=\!J$ and $\widehat J\!>\!J$, the mean of the number of change points detected, and the sure coverage probability for each of the change point. We also compare two methods that are frequently used in practice. {The first one vectorizes the response $Y_i$ without considering any manifold structure, which results in a $ m^2 $-dimensional vector. We denote this method as ``Vector''. The other one also adopts the vectorization idea, but additionally takes the symmetric information into account, yielding a $m(m+1)/2$-dimensional vector. We use ``Symmetric"  to denote this method.} The results are summarized in Tables \ref{table:estimation} and \ref{table:estimation2}. 
 

\begin{table}
{
\footnotesize
\begin{threeparttable}
\caption{The frequency of the number of change points when $\widehat J\!<\!J $, $\widehat J\!=\!J$ and $\widehat J\!>\!J$, the mean (s.e.) of the number of change points, and SCP of each change-point are reported. 
The results are based on 100 replications.}{%
\begin{tabular}{llccccccccc}
$ (n,m,J) $ & Method & $\widehat J\!<\!J $ & $\widehat J\!=\!J$ & $\widehat J\!>\!J$ & Mean & SCP 1 & SCP 2  & SCP 3 & SCP 4 \\ 
$ (100, 6, 2) $ & Proposed & 0.01 & 0.99 & 0 & 1.99(0.01) & 0.99 & 1 & NA & NA \\
                          & Vector  & 0.32 & 0.62 & 0.06 & 1.74(0.06) & 0.67 & 0.99 & NA & NA \\
                          & Symmetric  & 0.13 & 0.82 & 0.05 & 1.92(0.04) & 0.83 & 1 & NA & NA \\
$ (200, 6, 2) $ & Proposed & 0 & 0.95 & 0.05 & 2.05(0.02) & 1 & 1 & NA & NA \\
                          & Vector  & 0.64 & 0.22 & 0.14 & 1.56(0.09) & 0.17 & 0.99 & NA & NA \\
                          & Symmetric  & 0.65 & 0.17 & 0.18 & 1.62(0.1) & 0.22 & 1 & NA & NA \\
$ (200, 6, 4) $ & Proposed & 0.03 & 0.94 & 0.03 & 4(0.02) & 1 & 0.99 & 0.99 & 0.99\\
                          & Vector  & 0.56 & 0.27 & 0.17 & 3.08(0.14) & 0.47 & 0.48 & 0.93 & 0.89\\
                          & Symmetric  & 0.45 & 0.3 & 0.25 & 3.39(0.15) & 0.58 & 0.62 & 0.95 & 0.89\\
$ (400, 6, 4) $ & Proposed & 0 & 0.92 & 0.08 & 4.09(0.03) & 1 & 1 & 1 & 1\\
                          & Vector  & 0.6 & 0.23 & 0.17 & 3.28(0.18) & 0.05 & 0.1 & 0.73 & 0.89\\
                          & Symmetric  & 0.55 & 0.25 & 0.2 & 3.59(0.2) & 0.1 & 0.15 & 0.85 & 0.88\\
\end{tabular}}
\label{table:estimation}
\begin{tablenotes}
\footnotesize
\item  s.e., standard error; SCP, sure coverage probability; NA, not available.
\end{tablenotes}
\end{threeparttable}}
\end{table}

As indicated by the results, our proposed method performs better than the comparison methods in terms of the percentage of correctively recovering the number of change points, in all cases. Additionally, when $J=2$, all the methods presented here can detect the second change point very well. But for the first one, our method achieves a higher sure coverage probability. When $J=4$, our method achieves a higher sure coverage probability for all the change points. These results suggest the importance of considering the geometric structure of the Riemannian data, at least,  in change point detection problems.

\begin{table}
\footnotesize
\begin{threeparttable}
\caption{The frequency of the number of change points when $\widehat J\!<\!J $, $\widehat J\!=\!J$ and $\widehat J\!>\!J$, the mean (s.e.) of the number of change points, and the SCP of each change-point are reported. 
The results are based on 100 replications.}{%
\begin{tabular}{llccccccccc}
$ (n,m,J) $ & Method & $\widehat J\!<\!J $ & $\widehat J\!=\!J$ & $\widehat J\!>\!J$ & Mean & SCP 1 & SCP 2  & SCP 3 & SCP 4 \\ 
$ (100, 10, 2) $ & Proposed  & 0.01 & 0.99 & 0 & 1.99(0.01) & 0.99 & 1 & NA & NA \\
                          & Vector  & 0.61 & 0.38 & 0.01 & 1.4(0.05) & 0.37 & 1 & NA & NA \\
                          & Symmetric  & 0.27 & 0.69 & 0.04 & 1.77(0.05) & 0.72 & 1 & NA & NA \\
$ (200, 10, 2) $ & Proposed & 0 & 0.98 & 0.02 & 2.02(0.01) & 1 & 1 & NA & NA \\
                          & Vector  & 0.67 & 0.17 & 0.16 & 1.51(0.08) & 0.17 & 0.97 & NA & NA \\
                          & Symmetric  & 0.74 & 0.11 & 0.15 & 1.45(0.09) & 0.21 & 1 & NA & NA \\
$ (200, 10, 4) $ & Proposed  & 0.04 & 0.96 & 0 & 3.96(0.02) & 1 & 0.96 & 1 & 1\\
                          & Vector  & 0.51 & 0.31 & 0.18 & 3.28(0.12) & 0.33 & 0.57 & 0.94 & 0.96\\
                          & Symmetric  & 0.44 & 0.36 & 0.2 & 3.38(0.13) & 0.51 & 0.65 & 0.96 & 0.96\\
$ (400, 10, 4) $ & Proposed & 0 & 1 & 0 & 4(0) & 1 & 1 & 1 & 1\\
                          & Vector  & 0.64 & 0.21 & 0.15 & 2.95(0.19) & 0.03 & 0.08 & 0.55 & 0.89\\
                          & Symmetric  & 0.51 & 0.25 & 0.24 & 3.66(0.22) & 0.1 & 0.17 & 0.76 & 0.93\\
\end{tabular}}
\label{table:estimation2}
\begin{tablenotes}
\item s.e., standard error; SCP, sure coverage probability; NA, not available.
\end{tablenotes}
\end{threeparttable}
\end{table}

\section{An Application to the Human Connectome Project}\label{sec:5}
We apply the proposed methodology to the 
social cognition task related fMRI data from Human Connectome Project Dataset, which includes behavioral and 3T MR imaging data from 970 healthy adult participants collected from 2012 to spring 2015. We focus on the 850 subjects out of the 970 which have the 
social cognition task related fMRI data. 
Participants were presented with short video clips (20 seconds) of objects (squares, circles, triangles) that either interacted in some way, or moved randomly on the screen \citep{Castelli2000, Wheatley2007}. 
There were 5 video blocks (2 Mental and 3 Random in one run, 3 Mental and 2 Random in the other run) in the task run. 

We use the ``Desikan-Killiany'' atlas \citep{Desikan2006} to divide the brain into 68 regions of interest. Figure \ref{fig:ROIs}(a) shows the ``Desikan-Killiany'' parcellation of the cortical surface in the left and right hemisphere. We pick eight possible regions that are related to the social task, that are the left and right part of superior temporal, inferior parietal, temporal pole and precuneus \citep{Green2015}. 
These eight regions of interest are highlighted in yellow in Figure \ref{fig:ROIs}(b). For each subject, the fMRI data are recorded on 274 evenly spaced time points, one per 0.72 seconds. We use a moving local window of size 100 to calculate the cross covariance between these eight regions, which results in 175 cross covariance matrices with dimensions $ 8\times 8 $.  We then apply the proposed method to detect change points in this sequence of cross covariance matrices with bandwidth $h$ set to be $20$.

 We apply the method  to all the subjects, and report the locations of change points detected for each subject. In Table \ref{table:data}, we have summarized the count and percentage of the number of change points detected for all the subjects. The mean number of change points detected among all the subjects is $3.66(0.03)$. This result matches the physiology well since there are 5 video blocks in the task design, with changes  at the time points $ 35 $, $70$, $105 $ and $ 140 $ respectively.  
To further validate the proposed methodology,  we pick up all those subjects with four change points, and calculate the mean locations respectively. The means are $39.6$, $72.0$, $106.7$, $138.5$, which are fairly close to  the task block changes.  A more interesting observation is that the lags (4.6s, 2.0s, 1.7s and -1.5s) are becoming shorter and shorter: the last time point even precedes the designed time point. This, to our understanding, witnesses the powerfulness of human brains for learning the change patterns.



\begin{table}[th]
  \centering
    \caption{The count and percentage of the number of change points detected for social task related fMRI data.}
\vspace{3pt}
\begin{tabular}{lccccc}
 Number of Change Points & 1 & 2 & 3 & 4 & 5 \\ 
Count & 1 & 55 & 291 & 387 & 116 \\
Percentage & 0.12\% & 6.47\% & 34.24\% & 45.53\% & 13.65\% \\                                                                                                    
\end{tabular}
\label{table:data}
\end{table}

 \begin{figure}
  \centering
    \subfigure[]{\includegraphics[width=0.4\textwidth]{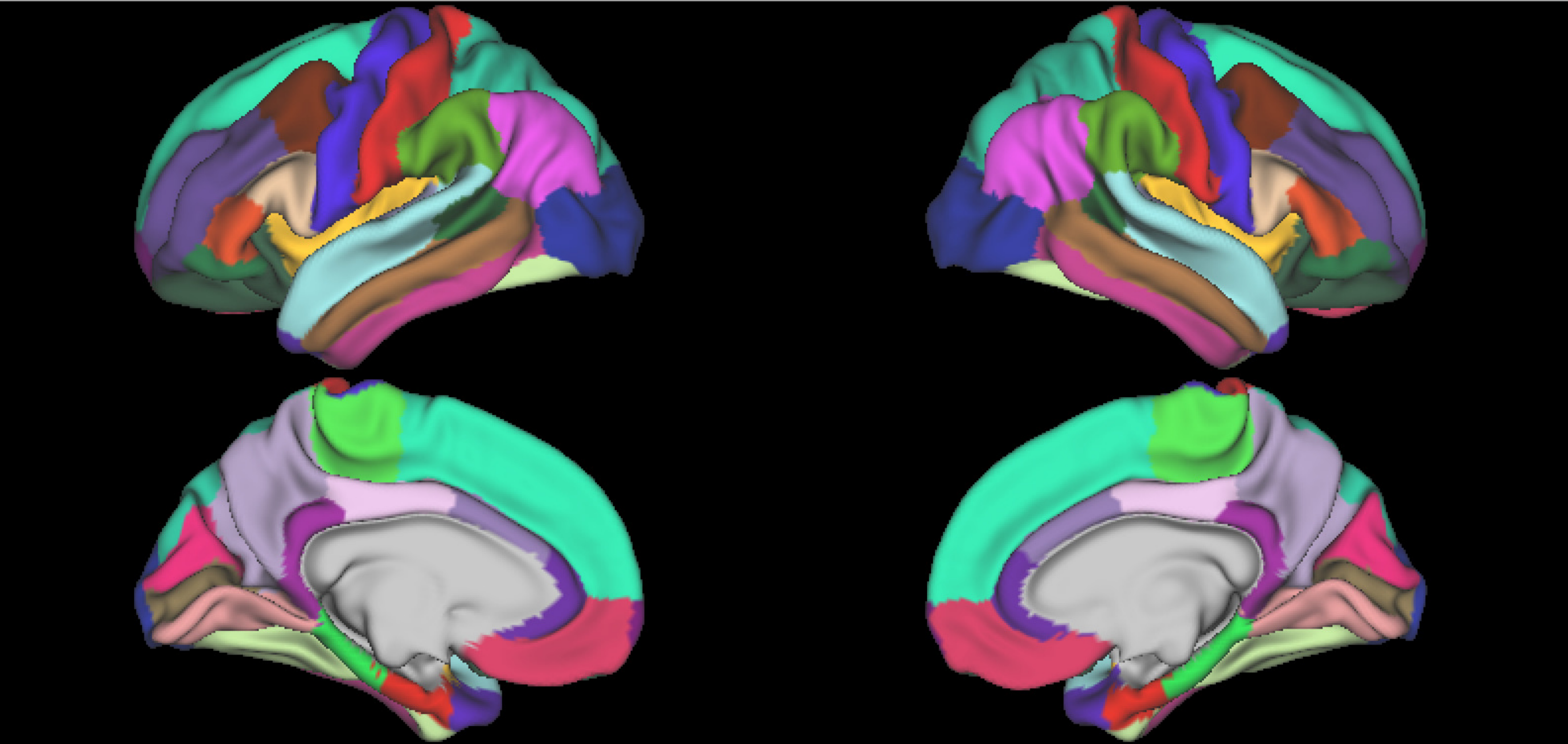}}
       \subfigure[]{\includegraphics[width=0.4\textwidth]{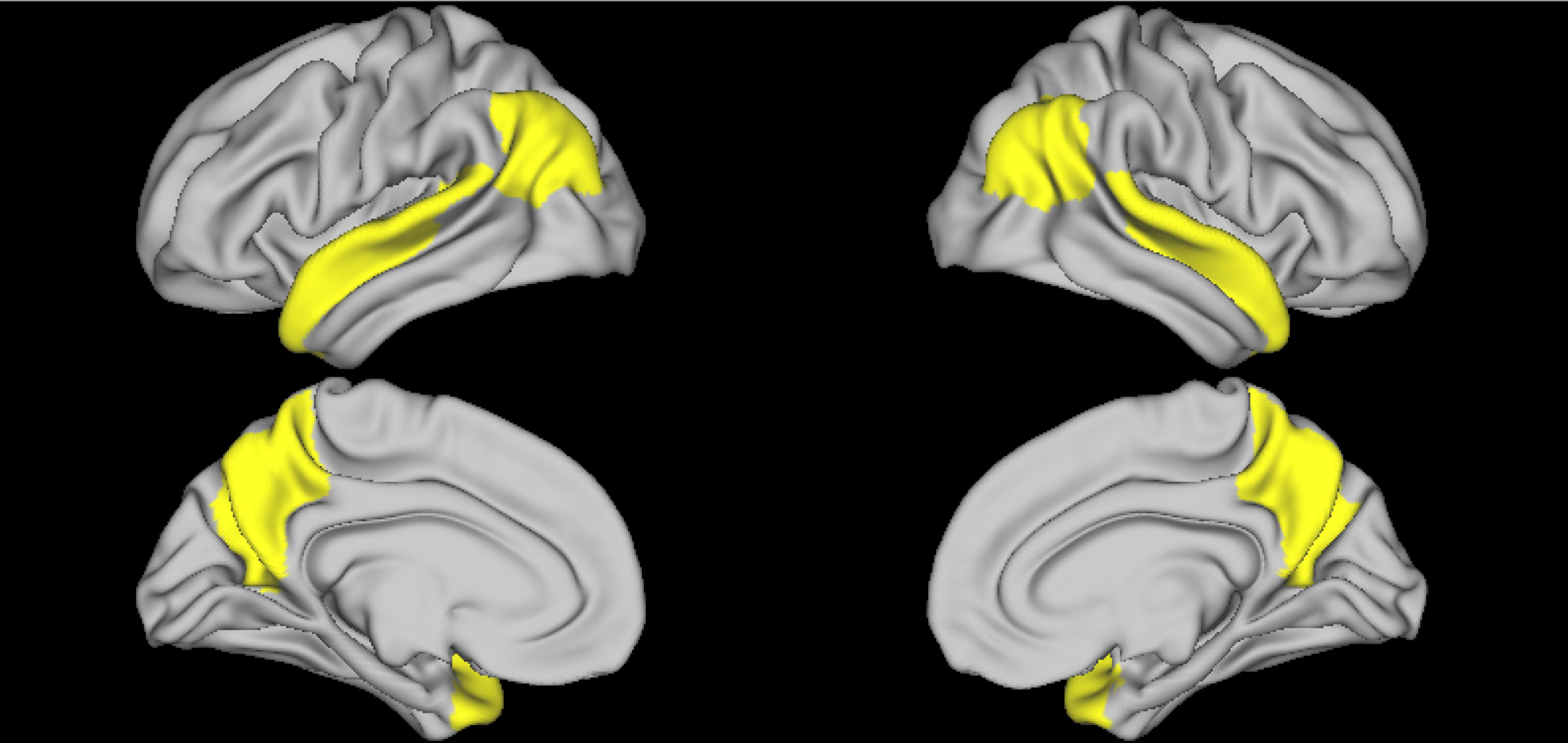}}                   
\caption{Panel (a): the ``Desikan-Killiany'' parcellation of the cortical surface in left and right hemisphere. Panel (b): the eight regions of interest.}
\label{fig:ROIs}
\end{figure}

\section{Discussion}

In this paper we propose an additive matrix-log mean model with a heterogeneous noise for modeling random symmetric positive definite matrices that lie in a Riemannian manifold.  The heterogeneous noise part takes account the manifold structure of the original symmetric positive definite matrices. 
Built upon this model, 
 we then propose a scan statistic to perform multiple change point detection.  Theoretical studies and numerical examples lend further support to our proposed methodology.
 
Our proposed methodology replies on the assumption that the collected samples $Y_i$'s are independent.  Independence is an  ideal assumption that may be violated in some settings. 
However, this assumption allows us to conduct theoretical analysis, which also produce results that could be useful when the assumption is  violated. We will pursue the change-point detection problems  under dependence for  Riemannian data in future work. 



\bibliographystyle{ims}
\bibliography{misc,manifold}

\begin{thebibliography}{29}
\expandafter\ifx\csname natexlab\endcsname\relax\def\natexlab#1{#1}\fi
\expandafter\ifx\csname url\endcsname\relax
  \def\url#1{\texttt{#1}}\fi
\expandafter\ifx\csname urlprefix\endcsname\relax\def\urlprefix{URL }\fi

\bibitem[{Arsigny et~al.(2007)Arsigny, Fillard, Pennec and
  Ayache}]{Arsigny2007}
\textsc{Arsigny, V.}, \textsc{Fillard, P.}, \textsc{Pennec, X.} and
  \textsc{Ayache, N.} (2007).
\newblock Geometric means in a novel vector space structure on symmetric
  positive-definite matrices.
\newblock \textit{SIAM Journal of Matrix Analysis and Applications} \textbf{29}
  328--347.

\bibitem[{Carlstein et~al.(1994)Carlstein, M{\"u}ller and
  Siegmund}]{carlstein1994change}
\textsc{Carlstein, E.~G.}, \textsc{M{\"u}ller, H.-G.} and \textsc{Siegmund, D.}
  (1994).
\newblock Change-point problems.
\newblock Institute of Mathematical Statistics.

\bibitem[{Castelli et~al.(2000)Castelli, Happ{\'e}, Frith and
  Frith}]{Castelli2000}
\textsc{Castelli, F.}, \textsc{Happ{\'e}, F.}, \textsc{Frith, U.} and
  \textsc{Frith, C.} (2000).
\newblock Movement and mind: a functional imaging study of perception and
  interpretation of complex intentional movement patterns.
\newblock \textit{NeuroImage} \textbf{12} 314--325.

\bibitem[{Chen and Zhang(2015)}]{chen2015graph}
\textsc{Chen, H.} and \textsc{Zhang, N.} (2015).
\newblock Graph-based change-point detection.
\newblock \textit{The Annals of Statistics} \textbf{43} 139--176.

\bibitem[{Chiu et~al.(1996)Chiu, Leonard and Tsui}]{chiu1996matrix}
\textsc{Chiu, T.~Y.}, \textsc{Leonard, T.} and \textsc{Tsui, K.-W.} (1996).
\newblock The matrix-logarithmic covariance model.
\newblock \textit{Journal of the American Statistical Association} \textbf{91}
  198--210.

\bibitem[{Daniels and Pourahmadi(2002)}]{daniels2002bayesian}
\textsc{Daniels, M.~J.} and \textsc{Pourahmadi, M.} (2002).
\newblock Bayesian analysis of covariance matrices and dynamic models for
  longitudinal data.
\newblock \textit{Biometrika} \textbf{89} 553--566.

\bibitem[{Desikan et~al.(2006)Desikan, S{\'e}gonne, Fischl, Quinn, Dickerson,
  Blacker, Buckner, Dale, Maguire and Hyman}]{Desikan2006}
\textsc{Desikan, R.~S.}, \textsc{S{\'e}gonne, F.}, \textsc{Fischl, B.},
  \textsc{Quinn, B.~T.}, \textsc{Dickerson, B.~C.}, \textsc{Blacker, D.},
  \textsc{Buckner, R.~L.}, \textsc{Dale, A.~M.}, \textsc{Maguire, R.~P.} and
  \textsc{Hyman, B.~T.} (2006).
\newblock An automated labeling system for subdividing the human cerebral
  cortex on {MRI} scans into gyral based regions of interest.
\newblock \textit{NeuroImage} \textbf{31} 968--980.

\bibitem[{Dryden et~al.(2009)Dryden, Koloydenko and Zhou}]{dryden2009non}
\textsc{Dryden, I.~L.}, \textsc{Koloydenko, A.} and \textsc{Zhou, D.} (2009).
\newblock Non-{E}uclidean statistics for covariance matrices, with applications
  to diffusion tensor imaging.
\newblock \textit{The Annals of Applied Statistics} \textbf{3} 1102--1123.

\bibitem[{Fletcher(2013)}]{Fletcher2013}
\textsc{Fletcher, P.~T.} (2013).
\newblock Geodesic regression and the theory of least squares on {R}iemannian
  manifolds.
\newblock \textit{International Journal of Computer Vision} \textbf{105}
  171--185.

\bibitem[{Friston(2011)}]{friston2011functional}
\textsc{Friston, K.~J.} (2011).
\newblock Functional and effective connectivity: a review.
\newblock \textit{Brain Connectivity} \textbf{1} 13--36.

\bibitem[{Green et~al.(2015)Green, Horan and Lee}]{Green2015}
\textsc{Green, M.~F.}, \textsc{Horan, W.~P.} and \textsc{Lee, J.} (2015).
\newblock Social cognition in schizophrenia.
\newblock \textit{Nature Reviews. Neuroscience} \textbf{16} 620--631.

\bibitem[{Hsu et~al.(2012)Hsu, Kakade and Zhang}]{Hsu2012}
\textsc{Hsu, D.}, \textsc{Kakade, S.} and \textsc{Zhang, T.} (2012).
\newblock A tail inequality for quadratic forms of subgaussian random vectors.
\newblock \textit{Electronic Communications in Probability} \textbf{17} 1--6.

\bibitem[{Hutchison et~al.(2013)Hutchison, Womelsdorf, Allen, Bandettini,
  Calhoun, Corbetta, Della~Penna, Duyn, Glover and
  Gonzalez-Castillo}]{hutchison2013dynamic}
\textsc{Hutchison, R.~M.}, \textsc{Womelsdorf, T.}, \textsc{Allen, E.~A.},
  \textsc{Bandettini, P.~A.}, \textsc{Calhoun, V.~D.}, \textsc{Corbetta, M.},
  \textsc{Della~Penna, S.}, \textsc{Duyn, J.~H.}, \textsc{Glover, G.~H.} and
  \textsc{Gonzalez-Castillo, J.} (2013).
\newblock Dynamic functional connectivity: promise, issues, and
  interpretations.
\newblock \textit{NeuroImage} \textbf{80} 360--378.

\bibitem[{James et~al.(1987)James, James and Siegmund}]{james1987tests}
\textsc{James, B.}, \textsc{James, K.~L.} and \textsc{Siegmund, D.} (1987).
\newblock Tests for a change-point.
\newblock \textit{Biometrika} \textbf{74} 71--83.

\bibitem[{Lee(1997)}]{Lee1997}
\textsc{Lee, J.~M.} (1997).
\newblock \textit{Riemannian Manifolds: An Introduction to Curvature}.
\newblock Springer-Verlag, New York.

\bibitem[{Leonard and Hsu(1992)}]{leonard1992bayesian}
\textsc{Leonard, T.} and \textsc{Hsu, J.~S.} (1992).
\newblock Bayesian inference for a covariance matrix.
\newblock \textit{The Annals of Statistics}  1669--1696.

\bibitem[{Niu and Zhang(2012)}]{Niu2012}
\textsc{Niu, Y.~S.} and \textsc{Zhang, H.} (2012).
\newblock The screening and ranking algorithm to detect {DNA} copy number
  variations.
\newblock \textit{The Annals of Applied Statistics} \textbf{6} 1306--1326.

\bibitem[{Olshen et~al.(2004)Olshen, Venkatraman, Lucito and
  Wigler}]{olshen2004circular}
\textsc{Olshen, A.~B.}, \textsc{Venkatraman, E.}, \textsc{Lucito, R.} and
  \textsc{Wigler, M.} (2004).
\newblock Circular binary segmentation for the analysis of array-based dna copy
  number data.
\newblock \textit{Biostatistics} \textbf{5} 557--572.

\bibitem[{Petersen and M{\"u}ller(2017)}]{Petersen2016}
\textsc{Petersen, A.} and \textsc{M{\"u}ller, H.-G.} (2017).
\newblock Fr\'{e}chet regression for random objects with {E}uclidean
  predictors.
\newblock \textit{The Annals of Statistics}  to appear.

\bibitem[{Schwartzman(2006)}]{Schwartzman2006}
\textsc{Schwartzman, A.} (2006).
\newblock \textit{Random ellipsoids and false discovery rates: Statistics for
  diffusion tensor imaging data}.
\newblock Ph.D. thesis, Stanford University.

\bibitem[{Steinke and Hein(2009)}]{steinke2009non}
\textsc{Steinke, F.} and \textsc{Hein, M.} (2009).
\newblock Non-parametric regression between manifolds.
\newblock In \textit{Advances in Neural Information Processing Systems}.

\bibitem[{Terras(2012)}]{Terras2012}
\textsc{Terras, A.} (2012).
\newblock \textit{Harmonic analysis on symmetric spaces and applications II}.
\newblock Springer Science \& Business Media.

\bibitem[{Tibshirani and Wang(2007)}]{tibshirani2007spatial}
\textsc{Tibshirani, R.} and \textsc{Wang, P.} (2007).
\newblock Spatial smoothing and hot spot detection for cgh data using the fused
  lasso.
\newblock \textit{Biostatistics} \textbf{9} 18--29.

\bibitem[{Wheatley et~al.(2007)Wheatley, Milleville and Martin}]{Wheatley2007}
\textsc{Wheatley, T.}, \textsc{Milleville, S.~C.} and \textsc{Martin, A.}
  (2007).
\newblock Understanding animate agents: distinct roles for the social network
  and mirror system.
\newblock \textit{Psychological Science} \textbf{18} 469--474.

\bibitem[{Xu and Lindquist(2015)}]{xu2015dynamic}
\textsc{Xu, Y.} and \textsc{Lindquist, M.~A.} (2015).
\newblock Dynamic connectivity detection: an algorithm for determining
  functional connectivity change points in fmri data.
\newblock \textit{Frontiers in Neuroscience} \textbf{9}.

\bibitem[{Yao(1988)}]{yao1988estimating}
\textsc{Yao, Y.-C.} (1988).
\newblock Estimating the number of change-points via {S}chwarz' criterion.
\newblock \textit{Statistics \& Probability Letters} \textbf{6} 181--189.

\bibitem[{Yuan et~al.(2012)Yuan, Zhu, Lin and Marron}]{YuanYing2012}
\textsc{Yuan, Y.}, \textsc{Zhu, H.}, \textsc{Lin, W.} and \textsc{Marron,
  J.~S.} (2012).
\newblock Local polynomial regression for symmetric positive definite matrices.
\newblock \textit{Journal of the Royal Statistical Society: Series B}
  \textbf{74} 697--719.

\bibitem[{Zhang and Siegmund(2007)}]{zhang2007modified}
\textsc{Zhang, N.~R.} and \textsc{Siegmund, D.~O.} (2007).
\newblock A modified bayes information criterion with applications to the
  analysis of comparative genomic hybridization data.
\newblock \textit{Biometrics} \textbf{63} 22--32.

\bibitem[{Zhu et~al.(2009)Zhu, Chen, Ibrahim, Li, Hall and Lin}]{Zhu2009}
\textsc{Zhu, H.}, \textsc{Chen, Y.}, \textsc{Ibrahim, J.~G.}, \textsc{Li, Y.},
  \textsc{Hall, C.} and \textsc{Lin, W.} (2009).
\newblock Intrinsic regression models for positive-definite matrices with
  applications to diffusion tensor imaging.
\newblock \textit{Journal of the American Statistical Association} \textbf{104}
  1203--1212.

\end{thebibliography}

\newpage
\appendix 
\renewcommand{\theequation}{S.\arabic{equation}}
\renewcommand{\thetable}{S.\arabic{table}}
\renewcommand{\thefigure}{S.\arabic{figure}}
\renewcommand{\thesection}{S.\arabic{section}}
\renewcommand{\thelemma}{S.\arabic{lemma}}

\vspace{30pt}
\noindent{\bf \LARGE Appendix}
\vspace{-10pt}
\section{Preliminary}
In this section, we further discuss the smooth and Riemannian
manifold. For a comprehensive treatment on these subjects,
readers are referred to the introductory book by 
\citet{Lee1997}. 

A smooth manifold $\manifold$ is a differentiable manifold with all
transition maps being $C^{\infty}$-differentiable. Associated with
each point $x$ on the manifold $\manifold$, there exists a linear space $\tangentspace x$
called the tangent space at the base point $x$. Each element in the tangent space is
called a tangent vector. For a manifold that is a  submanifold of a Euclidean
space, the tangent space at a point can be geometrically visualized
as the hyperplane tangent to that point, while tangent vectors are visualized as Euclidean vectors tangent to the manifold at that point; see Figure \ref{fig:tangent} for an illustration. It is emphasized that {tangent vectors at different base points are different, despite that the vectors might point to the same direction}. Thus, a tangent vector always implicitly comes with a base point. 
For Euclidean submanifolds, a tangent vector $v$ at a point $x$ can also be algebraically interpreted as a  directional derivative $D_{v}$ at $x$, such that $D_{v}f=v^\T\nabla f(x)$ for all $f\in C^\infty(\manifold)$ with $C^\infty(\manifold)$ denoting the collection of real-valued smooth functions  defined on the manifold $\manifold$. Observe that $D_v$ is a {derivation} at $x$, which satisfies the Leibniz rule, 
$$D_v(fg)=g(x)(D_vf)+f(x)(D_vg),$$ for any  $f,g\in C^\infty(\manifold)$ and $v\in \tangentspace x$, where $fg$ denotes the pointwise product of functions. This allows one to generalize the concept of tangent vector as directional derivative to non-Euclidean manifolds, by defining tangent vectors at $x$ as derivations at $x$, and tangent space at $x$ as the space of derivations at $x$. A convenient way to perceive the derivation represented by a tangent vector $v$ is to treat $D_v:\,C^\infty(\manifold)\rightarrow \real$ as a linear functional that maps $C^\infty(\manifold)$ into $\real$. 

For a smooth transformation $\varphi:\mathcal{N}\rightarrow\manifold$ that maps points on a manifold $\mathcal{N}$ to points on the manifold $\manifold$, its \emph{differential}
at $x$, denoted by $\pf{\varphi}{x}$, is a linear map sending a tangent vector  $v\in\tangentspace[\mathcal{N}]x$
to a tangent vector $\pf{\varphi}{x}(v)\in\tangentspace[\manifold]{\varphi(x)}$, such that the derivation $D_{\pf{\varphi}{x}(v)}$ corresponding to the tangent vector $\pf{\varphi}{x}(v)$  at the point $\varphi(x)\in\manifold$  is depicted by $$D_{\pf{\varphi}{x}(v)}: C^\infty(\manifold)\rightarrow\real,\,\,\,\text{s.t.}\,\,\,\forall f\in C^\infty(\manifold):\,\,D_{\pf{\varphi}{x}(v)}f=D_{v}(f\circ\varphi),$$ where $f\circ g$ denotes the composition of functions. 
 When both $\mathcal{N}$ and $\manifold$ are Euclidean submanifolds, the differential $\pf{\varphi}{x}$ is the usual notion of differential of the function $\varphi$ at $x$, given by a Jacobian matrix.  Specially, when $\mathcal{N}$ is an interval $(a,b)$ of the real line, the tangent space at each $t\in(a,b)$ is the whole real line $\real$. In this case, $\varphi$ is often called a (parameterized) smooth curve on $\manifold$, and $\pf{\varphi}{t}(1)$  is denoted by $\varphi^\prime(t)$ which is the derivative of the curve at time $t$. Here, to properly decode the notation $\pf{\varphi}{t}(1)$,  recall that $\pf{\varphi}{t}$ is a map sending a tangent vector of a manifold to a tangent vector of another manifold, and for the special manifold $(a,b)$, the real number $1$ can be viewed as a tangent vector at $t\in(a,b)$. Geometrically and intuitively, $\varphi^\prime(t)$ is a vector tangent to the curve $\varphi$ at time $t$, as illustrated in Figure \ref{fig:tangent}.


A Riemannian manifold is a smooth manifold endowed with an inner product
$\innerprod{\cdot}{\cdot}_{x}$ on the
tangent space at each point $x$, such that $\innerprod{\cdot}{\cdot}_{x}$
varies with $x$ smoothly. The collection of such inner products is
often called the Riemannian metric tensor, or simply Riemannian metric.
One can show that, the metric tensor induces a distance function that
turns the manifold into a metric space. A geodesic is a smooth curve on the manifold such that for
any sufficiently small segment, the segment is the unique smooth
curve with the minimal length among all smooth curves connecting the two
endpoints of the segment. Every smooth curve on the manifold can be
parameterized by a smooth map $\gamma$ from an interval in $\real$
to the manifold. For any $u\in\tangentspace x$, there exists a unique
geodesic $\gamma$ such that $\gamma^\prime(0)=u$ and $\gamma(0)=x$.
Then the exponential map at $x$, denoted by $\Exp_{x}$, is defined
by $\Exp_{x}(u)=\gamma(1)$. For example, one can verify that for
the unit circle $\mathbb{S}^{1}\equiv\{(z_1,z_2)\in\real^{2}:z_1^{2}+z_2^{2}=1\}$,
for $x\in\mathbb{S}^{1}$ and $v\in\tangentspace[\mathbb{S}^{1}]x$,
the defining geodesic for the Riemannian exponential map $\Exp_{x}$
is $\gamma(t)=\cos(t\|u\|_{2})x+\sin(t\|u\|_{2})u/\|u\|_{2}$, as
$\gamma(0)=x$ and $\gamma^\prime(0)=u$. Thus, $\Exp_{x}(u)=\gamma(1)=\cos(\|u\|_{2})x+\sin(\|u\|_{2})u/\|u\|_{2}$. A graphical illustration of  the Exp map is given in Figure \ref{fig:tangent}.

\begin{figure}
\center{\includegraphics[scale=1]{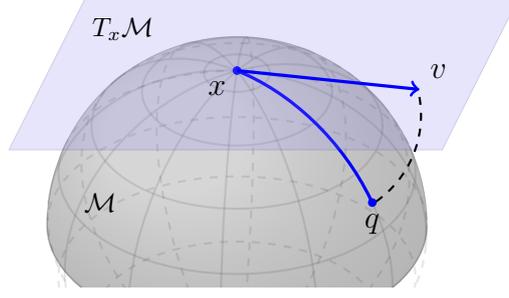}}
\caption{Geometric illustration of tangent vector, tangent space, curve and exponential map. $\gamma(t)$ with $\gamma(0)=x$ and $\gamma(1)=q$ is a smooth curve on $\manifold$. $v$ is a tangent vector at $x$ and also tangent to the curve $\gamma$ at $t=0$, i.e. $v=\gamma^\prime(0)$. If in addition $\gamma(t)$ is a geodesic, then $q=\gamma(1)=\Exp_x(v)$. }\label{fig:tangent}
\end{figure}

For the Log-Euclidean metric,
at the identity matrix $I$, it is defined as $\innerprod UV_{I}=\tr(UV)$ 
for $U,V\in\sym=\tangentspace[\spd]I$,   the Frobenius inner
product on $\sym$. In order to define metric at other points, the
following group structure is considered. Define $S_{1}\odot S_{2}=\exp(\log S_{1}+\log S_{2})$,
where $\exp$ and $\log$ are the matrix exponential and  logarithm  respectively. 
 The operation $\odot$ turns $\spd$ into a group.
Now we define the left-translation operator $\mathscr{L}_{S}Q=S\odot Q$
for $S,Q\in\spd$. As shown in \citet{Arsigny2007}, $\mathscr{L}_{S}$
is a smooth map from $\manifold$ to itself. Thus, its differential
$\pf{(\mathscr{L}_{S})}{Q}$ at $Q$ is a linear map that sends tangent
vectors at $S$ to tangent vectors at $S\odot Q$. For instance, the
linear operator $\mathscr{I}_{S}=\pf{(\mathscr{L}_{S^{-1}})}{S}$ maps tangent vectors at $S$ to tangent vectors at the identity. Given
this property, we can ``translate'' the metric at the identity matrix
to all points by the left-translation operator $\mathscr{L}$. More
specifically, the Log-Euclidean metric at any $S\in\spd$ is defined 
by $\innerprod UV_{S}=\innerprod{\mathscr{I}_{S}U}{\mathscr{I}_{S}V}_I=\innerprod{\pf{\log}{S}U}{\pf{\log}{S}V}_I$
for all $U,V\in\sym$, where the last identity is due to $\pf{(\mathscr{L}_{S^{-1}})}{S}=\pf{\log}S$ \citep{Arsigny2007}. The Riemannian exponential map under this metric
is given by $\Exp_{S}U=\exp(\log S+\pf{\log}{S}U)$.


\section{ Proofs}
\begin{lemma}\label{lem:2}
Let $\mathcal{A}_{n}(h,\rho)=\bigcap_{x\in\mathcal{F}}\{\lnorm[2]{G(x,h)}\leq\rho\}$,
$\mathcal{B}_{n}(h,\rho)=\bigcap_{\tau\in\mathcal{J}}\{\lnorm[2]{G(\tau,h)}\geq\rho\}$
and $\mathcal{E}_{n}(h,\rho)=\mathcal{A}_{n}(h,\rho)\cap\mathcal{B}_{n}(h,\rho)$, where $\mathcal{F}$ denotes the collection of flat points, i.e., $x\in\mathcal{F}$ if and only if $\mu_{j}=\mu_{x}$ for all $j\in\{x-h+1,\ldots,x+h\}$. Then $\mathcal{J}\subset \widehat{\mathcal{J}}\pm h$ holds under the event $\mathcal{E}_n(h,\rho)$.
\end{lemma}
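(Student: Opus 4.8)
The plan is to treat the statement as a deterministic implication: fix a sample point in $\mathcal{E}_n(h,\rho)$ and write $f(x)=\lnorm{G(x,h)}$, so that $\mathcal{A}_n$ reads $f(x)\le\sqrt\rho$ for every flat $x$ and $\mathcal{B}_n$ reads $f(\tau_j)\ge\sqrt\rho$ for every change point $\tau_j$. First I would record the geometry forced by $h=L/2$. A point $x$ is non-flat precisely when some change point lies in $\{x-h+1,\dots,x+h-1\}$, i.e. when $|x-\tau_j|\le h-1$ for some $j$; hence the non-flat points split into the clusters $C_j=\{x:|x-\tau_j|\le h-1\}$. Because consecutive change points are at least $L=2h$ apart, the clusters are disjoint and, crucially, no detection window $(x-h,x+h)$ can contain two change points. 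This gives the bookkeeping needed to match the ordered lists $\tau_1<\cdots<\tau_J$ and the elements of $\widehat{\mathcal{J}}$.

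The two halves of $\mathcal{E}_n$ then act in opposite directions. On $\mathcal{A}_n$ no flat point clears the threshold, so every element of $\mathcal{L}(h)$ lying in $\widehat{\mathcal{J}}$ must sit inside some $C_j$, i.e. within distance $h-1<h$ of a true change point; combined with the one-change-point-per-window property this shows each selected point is attributable to exactly one $\tau_j$. For the reverse inclusion I would show that each $C_j$ contains a qualifying local maximizer. Taking $\hat\tau$ to be a maximizer of $f$ over $C_j$, we have $f(\hat\tau)\ge f(\tau_j)\ge\sqrt\rho$ by $\mathcal{B}_n$, so $\hat\tau$ clears the threshold and $|\hat\tau-\tau_j|\le h-1<h$; if in addition $\hat\tau\in\mathcal{L}(h)$ then $\hat\tau\in\widehat{\mathcal{J}}$ and $\tau_j\in(\hat\tau-h,\hat\tau+h)$. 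To see $\hat\tau$ is an $h$-local maximizer I would check $f(\hat\tau)\ge f(y)$ for $y\in\{\hat\tau-h,\dots,\hat\tau+h\}$ by splitting into $y\in C_j$ (immediate from the choice of $\hat\tau$) and $y\notin C_j$ (where I want $y$ flat, so that $f(y)\le\sqrt\rho\le f(\hat\tau)$ by $\mathcal{A}_n$). Assembling the two inclusions and the matching yields $\hat J=J$ and $\tau_j\in(\hat\tau_j-h,\hat\tau_j+h)$ for all $j$.

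The step I expect to be the real obstacle is precisely the local-maximizer verification in the last display. The comparison window has radius $h$, whereas $C_j$ only extends to distance $h-1$ from $\tau_j$; since $h=L/2$ is the tight separation, a point $y$ just outside $C_j$ toward a neighbouring change point $\tau_{j\pm1}$ need not be flat when $\tau_{j+1}-\tau_j$ is close to $2h$, and there $\mathcal{E}_n$ offers no control on $f(y)$. So the clean dichotomy ``$y\in C_j$ or $y$ flat'' can break, and a maximizer of $f$ over $C_j$ is not automatically a local maximizer. To close this gap I would invoke the deterministic profile of the mean statistic near an isolated change point, $\lnorm{\mathbb{E} G(x,h)}=\lnorm{\delta_{\tau_j}}\,(1-|x-\tau_j|/h)_+$, which is strictly maximized at $\tau_j$ and decays to $0$ across the flat buffer; this pins the maximum of $f$ near $\tau_j$ once the fluctuations already controlled on $\mathcal{E}_n$ are subtracted. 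An alternative is a greedy/global argument: the overall maximizer of $f$ is automatically an $h$-local maximizer lying in some cluster, so repeatedly extracting it and deleting its window produces one qualifying local maximizer per change point. Making this localization—and hence the exact count $\hat J=J$—fully rigorous is where the argument needs the most care.
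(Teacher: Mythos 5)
The paper offers no argument of its own here --- its ``proof'' is a bare pointer to Lemma 3 of Niu and Zhang (2012) --- so there is nothing to match step by step; but your diagnosis of where a from-scratch proof strains is exactly right, and the problem is in fact worse than you suggest: under $\mathcal{E}_n(h,\rho)$ alone the conclusion does not follow deterministically. The event constrains $\|G(\cdot,h)\|_2$ only from above at flat points and from below at change points, leaving it completely free at the non-flat, non-change points that fill your clusters $C_j$. Concretely, take $\tau_2=\tau_1+2h$ (the tight spacing permitted by $h=L/2$), set $\|G\|_2^2$ equal to the threshold $\rho$ at both change points, put a large value at $y_1=\tau_1+h-1$ and a still larger one at $\tau_2-1$, and make everything else in the clusters negligible. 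Then $\tau_1$ is not an $h$-local maximizer (beaten by $y_1$, which lies in its window), $y_1$ is not one either (beaten by $\tau_2-1$, at distance $h$ from $y_1$), and the only selected point near this pair is $\tau_2-1$, at distance $2h-1$ from $\tau_1$: every constraint defining $\mathcal{E}_n$ holds, yet $\tau_1$ is uncovered. Since $G(\cdot,h)$ is a linear filter of the data whose shifted kernels are linearly independent, such a configuration of values is realizable by admissible data, so no amount of cluster bookkeeping can rescue the dichotomy ``$y\in C_j$ or $y$ flat'' that your argument (correctly) identifies as the crux.

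This also shows why both of your proposed repairs fail as stated. The tent profile $\|\mathbb{E}G(x,h)\|_2=\|\delta_{\tau_j}\|_2\,(1-|x-\tau_j|/h)_+$ is the right idea, but exploiting it requires fluctuation control on $\|G(x,h)-\mathbb{E}G(x,h)\|_2$ at \emph{every} $x$, and $\mathcal{E}_n$ supplies none off the flat and change-point sets; the greedy extraction argument produces at least one selected maximizer in \emph{some} cluster but cannot manufacture one per cluster (in the configuration above, the cluster around $\tau_1$ simply contains none). The correct route is to strengthen the event to $\bigcap_{x}\{\|G(x,h)-\mathbb{E}G(x,h)\|_2^2\leq a_n\}$: the same Hsu--Kakade--Zhang tail bound invoked in the theorem applies verbatim, the union bound over all $n$ points still gives the $1/\log n$ rate, and the tent profile combined with your cluster bookkeeping then localizes every above-threshold local maximizer to within $h$ of a unique change point and guarantees exactly one per change point --- which also delivers the claim $\hat{J}=J$ that the theorem extracts from this lemma even though the lemma as stated never asserts it. In short: you found a genuine gap, it is a gap in the paper's lemma as literally stated (not merely in your attempt), and your first fix is the right idea but requires modifying the event $\mathcal{E}_n$, a change the paper, deferring entirely to the cited lemma, never acknowledges.
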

\begin{proof}[Proof of Lemma \ref{lem:2}]
The proof can be found in Lemma 3 of \cite{Niu2012}.
\end{proof}


\begin{proof}[Proof of Theorem \ref{thm:sure-coverage}]
We first note the following facts about subgaussian random vectors
that will be used in the sequel.
\begin{itemize}
\item If $\xi_{1},\ldots,\xi_{n}$ are independent and  subgaussian 
with parameters $(\mu_{1},\eta_{1}),\ldots,(\mu_{n},\eta_{n})$, respectively,
then $\sum_{i}\xi_{i}$ is a subgaussian random vector with parameters
$\sum_{i}\mu_{i}$ and $\{\sum_{i}\eta_{i}^{2}\}^{1/2}$.
\item If $\xi\in\real^{d}$ is a subgaussian random vector with a parameter
$(\mu,\eta)$ and $A$ is a $d\times d$ matrix, then $A\xi$ is a
subgaussian random vector with parameters $A\mu$ and $\eta\sqrt{\|AA^{\T}\|}$.
\end{itemize}

A point $x$ is called a $h$-flat point (or simply flat point if
$h$ is clear from the context) if $\mu_{j}=\mu_{x}$ for all $j\in\{x-h,\ldots,x+h\}$.
For a flat point $x$, $G(x,h)=\sum_{i=x-h}^{x}h^{-1}\Sigma_{i}\vec{\varepsilon}_{i}-\sum_{i=x+1}^{x+h}h^{-1}\Sigma_{i}\vec{\varepsilon}_{i}$
has mean zero and also is a subgaussian random vector with parameters
$0$ and $\sqrt{2h^{-2}\sum_{i}\sigma_{i}^{2}}\leq\sigma\sqrt{2/h}$.
Here, we recall that $\Sigma_{i}\vec{\varepsilon}_{i}$ is subgaussian
with the parameter $(0,\sigma_{i})$. For a change-point $\tau$,
similarly, $G(\tau,h)$ is subgaussian with parameter $(\delta_{\tau},\sigma\sqrt{2/h})$.

Let $t_{n}=\log n+\log\log n$ and $a_{n}=2\sigma^{2}h^{-1}(d+2\sqrt{d}+2t_n)$.
For a flat point $x$, we first observe that
\[
\prob\{\|G(x,h)\|_{2}^{2}>a_{n}\}\leq e^{-t_{n}}=\frac{1}{n\log n}
\]
according to Theorem 1 of  \cite{Hsu2012}. According to Assumption
\ref{sec:1} and the choice of $\rho$ and $L$, we have $\rho\geq a_{n}$ and
thus 
\[
\prob\{\|G(x,h)\|_{2}^{2}>\rho\}\leq\prob\{\|G(x,h)\|_{2}^{2}>a_{n}\}\leq\frac{1}{n\log n}.
\]
Similarly, from Assumption \ref{sec:1} we deduce that $\|\delta_{\tau}\|_{2}\geq\delta\geq2\sqrt{a_{n}}$
and $\sqrt\rho\leq\|\delta_{\tau}\|_{2}-\sqrt{a_{n}}$. Thus, for
a change point $\tau$,
\begin{align*}
\prob\{\|G(\tau,h)\|_{2}<\sqrt{\rho}\} & \leq\prob\{\|G(\tau,h)\|_{2}<\|\delta_{\tau}\|_{2}-\sqrt{a_{n}}\}\\
 & \leq\prob\{\left|\|G(\tau,h)\|_{2}-\|\delta_{\tau}\|_{2}\right|>\sqrt{a_{n}}\}\\
 & \leq\prob\{\|G(\tau,h)-\delta_{\tau}\|_{2}^{2}>a_{n}\}\\
 & \leq\frac{1}{n\log n},
\end{align*}
or equivalently, 
\[
\prob\{\|G(\tau,h)\|_{2}^{2}<\rho\}\leq\frac{1}{n\log n}.
\]

Next, we bound the probabilities of events defined in Lemma \ref{lem:2}.
\begin{align}
\prob\{\mathcal{E}_{n}(h,\rho)\} & =1-\prob\left[\{\mathcal{A}_{n}(h,\rho)\}^{c}\cup\{\mathcal{B}_{n}(h,\rho)\}^{c}\right] \nonumber \\
& \geq1-\prob\left[\{\mathcal{A}_{n}(h,\rho)\}^{c}\right]-\prob\left[\{\mathcal{B}_{n}(h,\rho)\}^{c}\right].\label{eq:EF-1}
\end{align}
Now, note that 
\begin{align}
\prob\left[\{\mathcal{A}_{n}(h,\rho)\}^{c}\right] & =\prob\left\{ \exists x\in\mathcal{F}:\lnorm[2]{G(x,h)}>\rho\right\} \leq\sum_{x\in\mathcal{F}}\prob\left\{ \lnorm[2]{G(x,h)}>\rho\right\} \nonumber \\
 & \leq\sum_{x\in\mathcal{F}}\frac{1}{n\log n}\leq n\left(\frac{1}{n\log n}\right)=\frac{1}{\log n}.\label{eq:EF-2}
\end{align}
Similarly, 
\begin{align}
\prob\left[\{\mathcal{B}_{n}(h,\rho)\}^{c}\right] & =\prob\left\{ \exists\tau\in\mathcal{\mathcal{J}}:\lnorm[2]{G(\tau,h)}<\rho\right\} \leq\sum_{\tau\in\mathcal{\mathcal{J}}}\prob\left\{ \lnorm[2]{G(\tau,h)}<\rho\right\} \nonumber \\
 & \leq\sum_{\tau\in\mathcal{J}}\frac{1}{n{\log n}}\leq\frac{n}{n{\log n}}=\frac{1}{{\log n}}.\label{eq:EF-3}
\end{align}
Combining (\ref{eq:EF-1}), (\ref{eq:EF-2}) and (\ref{eq:EF-3}),
we conclude that 
\begin{equation}
\prob\{\mathcal{E}_{n}(h,\rho)\}\geq1-\frac{2}{\log n}\rightarrow1.\label{eq:EF-4}
\end{equation}
Finally, the theorem follows from Lemma \ref{lem:2}. 
\end{proof}

\end{document}